\theoremstyle{plain}
\newtheorem{thm}{Theorem}
\newtheorem{prop}[thm]{Proposition}
\theoremstyle{definition}
\newtheorem{defn}[thm]{Definition}
\theoremstyle{remark}
\newtheorem{rem}[thm]{Remark}
\newtheorem{example}[thm]{Example}
\newcommand{\norm}[1]{\left\Vert#1\right\Vert}
\newcommand{\R}{\mathbb R}
\renewcommand{\H}{\mathcal{H}}
\newcommand{\g}{\mathfrak{g}}
\DeclareMathOperator{\Tr}{Tr}
\newcommand{\pb}{\{\cdot,\cdot\}}
\renewcommand{\to}{\rightarrow}
\newcommand{\oth}{\,\widehat{\otimes} \,}
\newcommand{\Ci}{\mathscr{C}^\infty}
\newcommand{\todot}[1]
  {\vspace{5 mm}\par \noindent \marginpar{\LARGE TG} \framebox{\begin%!!!
  {minipage}[c]{0.95 \textwidth} \tt\color{OliveGreen} #1
\end{minipage}}\vspace{5 mm}\par}
\newcommand{\todob}[1]
  {\vspace{5 mm}\par \noindent \marginpar{\LARGE ABT} \framebox{\begin%!!!
  {minipage}[c]{0.95 \textwidth} \tt\color{WildStrawberry} #1
\end{minipage}}\vspace{5 mm}\par}
\newcommand{\todop}[1]
  {\vspace{5 mm}\par \noindent \marginpar{\LARGE PR} \framebox{\begin%!!!
  {minipage}[c]{0.95 \textwidth} \tt\color{blue} #1
\end{minipage}}\vspace{5 mm}\par}
\begin{document}
\bibliographystyle{grz2}
\title[Poisson structures in infinite dimensions]{Poisson structures in the Banach setting: comparison of different approaches}
\author[T.~Goli\'nski]{Tomasz Goli\'nski}
\address{University of Bia\l ystok\\ Cio\l kowskiego 1M\\15-245 Bia\l ystok\\ Poland}
\email{tomaszg@math.uwb.edu.pl}
\author[P.~Rahangdale]{Praful Rahangdale}
\address{Universit\"at Paderborn\\
    Institut f\"ur Mathematik\\
    Warburger Str. 100\\
    33098 Paderborn\\ Germany}

    \email{praful@math.uni-paderborn.de}

\author[A.B.~Tumpach]{Alice Barbora Tumpach}
\address{
\begin{minipage}{0.5\textwidth}
UMR CNRS 8524\\
UFR de Math\'ematiques\\
Laboratoire Paul Painlev\'e\\
59 655 Villeneuve d'Ascq Cedex\\ France \\
\end{minipage}
\begin{minipage}{0.5\textwidth}
Institut CNRS Pauli\\ UMI  CNRS 2842\\ Oskar-Morgenstern-Platz 1 \\1090 Wien\\Austria \\
\end{minipage}
}
\email{alice-barbora.tumpach@univ-lille.fr}
\thanks{The first and third authors were supported by the 2020 National Science Centre, Poland / Fonds zur Förderung der wissenschaftlichen Forschung, Austria grant ``Banach Poisson--Lie groups and integrable systems'' 2020/01/Y/ST1/00123, I-5015N. The second author was funded by TRR 358 ``Integral Structures in Geometry and Representation Theory'' (SFB-TRR 358/1 2023 – 491392403).}

\begin{abstract}
In this paper we examine various approaches to the notion of Poisson manifold in the context of Banach manifolds. Existing definitions are presented and differences between them are explored and illustrated with examples.
\end{abstract}
\subjclass{53D17, 58B20, 46T05}
\keywords{Poisson bracket, Banach manifold, Lie algebra, Poisson tensor}

\maketitle

% \tableofcontents

\section{Introduction}

Poisson brackets are very useful tools in classical mechanics, where they are usually considered on finite-dimensional manifolds. However, the theory of integrable systems (e.g. Korteweg--de Vries hierarchy, non-linear Schr\"odinger, see \cite{fadeev,dubrovin1984poisson,grabowski94,kolev07} for more examples) uses a similar approach on function spaces or diffeomorphism groups, which are infinite-dimensional. This point of view is usually only formal. It is a long-standing issue to formalize this approach, and the study of Poisson brackets on Banach manifolds can be seen as a first step in this direction, even though the spaces needed for the study of integrable systems are usually not modeled on Banach spaces. Some integrable systems, though, fit in the Banach framework, see e.g. \cite{OR,Oind,DO-L2,GO-grass,grabowski18,GT-momentum}.

In the context of Banach manifolds $M$, symplectic geometry has two distinct flavors: there are strong symplectic manifolds and weak symplectic manifolds. In both cases, the symplectic form defines a ``flat'' map $\flat$, which maps the tangent bundle $TM$ into the cotangent bundle $T^*M$. It is always injective as a consequence of the non-degeneracy of the symplectic form, but in the infinite-dimensional case, it does not need to be surjective as one cannot employ dimension counting argument.  

The typical example of symplectic manifolds is given by cotangent bundles. The canonical symplectic form on the cotangent space of a Banach manifold turns out to be strong if and only if the manifold is modeled on reflexive Banach spaces, see \cite[\S 1, Theorem 3]{marsden-chernoff}.

%\todot{Does strong imply reflexive in all cases? Probably not. How about the other way?}

In the case of strong symplectic manifolds the flat map $\flat$ is invertible and it's inverse can be interpreted as a Poisson tensor. However, even in finite dimensions, not every Poisson tensor can be obtained in this way, as it may not be invertible, for example, the Kirillov--Kostant--Souriau Lie--Poisson tensor on the dual of a Lie algebra vanishes at the origin. In the weak case, when $\flat$ is not surjective, one can consider its inverse on its image. In general, this image does not need to be closed, and thus one may need to introduce a different Banach structure on it. The result is a kind of Poisson tensor, which is defined only on a subset of the cotangent bundle given by the image of the flat map $\flat$. In consequence, it leads to Poisson brackets, which are not defined on the whole algebra $\Ci(M)$ of smooth functions on the Banach manifold $M$, but rather on a subalgebra $\mathcal{A}$ consisting of functions whose differentials belong to the image of the flat map $\flat$ (see Theorem 48.8 in \cite{michor} and Proposition 3.11. in \cite{tumpach-bruhat}). In the literature, they were called weak Poisson brackets, sub-Poisson brackets, partial Poisson brackets, or generalized Poisson brackets.  This situation is similar to the concept of sub-Riemannian geometry, where the Riemann tensor is only defined on a subbundle of the tangent bundle. Moreover, for some of the definitions of Poisson structures, Hamiltonian vector fields may not exist for all the functions for which the bracket is defined.

Let us note that we confine ourselves to the context of Banach manifolds. Some results were obtained in a more general context, such as locally convex vector spaces \cite{glockner2009,neeb14,glockner22} or convenient spaces \cite{pelletier19,pelletier24}, vertex algebra, and $\lambda$-brackets \cite{valeri15,valeri16,valeri16a}. We include the approaches from \cite{neeb14, pelletier19}, but only in the Banach context, in order to be able to compare these with other approaches in the Banach case.

\section{Notation}

For any Banach space $E$, $E^*$ will denote the continuous dual of $E$ endowed with its natural Banach structure. Let $E_1,E_2,\dots$ be Banach spaces. We will denote by $L(E_1;E_2)$ the space of all bounded linear maps $E_1\to E_2$. By $B(E_1,E_2;E_3)$ we will denote the space of all bounded bilinear maps $E_1\times E_2\to E_3$. Note that in the Banach context it is isometrically isomorphic to $L(E_1;L(E_2;E_3))$.

In this paper, $M$ will denote a smooth Banach manifold (unless otherwise specified in some particular instances). The algebra of all smooth functions on $M$ will be denoted by $\Ci(M)$. The cotangent bundle of $M$ will be denoted by $T^*M$.  The natural duality pairing between $T^*M$ and $TM$ will be denoted by $\langle\cdot, \cdot\rangle$. 
%A subbundle $\mathbb{F}$ of $T^*M$ is said to be in duality with $TM$ if and only if the natural duality pairing between $T^*M$ and $TM$ restricts to a duality pairing between $\mathbb{F}$ and $TM$, i.e. for each $m\in M$, $\langle\cdot, \cdot\rangle : \mathbb{F}_m\times T_mM\rightarrow \mathbb{R}$ is non-degenerate. This is equivalent to the fact that for any $m\in M$, $\mathbb{F}_m$ separates points in $T_mM$. 
The space of smooth sections of a vector bundle $V$ over $M$ will be denoted by $\Gamma(V)$. In particular, $\Gamma(T^*M)$ is the space of one-forms on $M$ and $\Gamma(TM)$ the space of vector fields on $M$. The Lie derivative along a vector field $X\in \Gamma(TM)$ will be denoted by $\mathcal{L}_{X}$.

\begin{defn}\label{def:Jacobi}
    A bilinear antisymmetric operation $\{\cdot, \cdot\}$ on a commutative associative algebra $\mathcal A$ satisfies the \textbf{Jacobi identity} iff
    \begin{equation}\label{Jacobi} 
\{a,\{b,c\}\} + \{b,\{c,a\}\} + \{c,\{a,b\}\} = 0,
\end{equation}
for all $a,b,c\in\mathcal A$.
\end{defn}

\begin{defn}\label{def:Leibniz}
    A bilinear antisymmetric operation $\{\cdot, \cdot\}$ on a commutative associative algebra $\mathcal A$ satisfies \textbf{Leibniz rule} iff
    \begin{equation}\label{Leibniz}
        \{ ab, c\} = \{ a ,c\}b + a\{ b,c\} ,
    \end{equation} 
    for all $a,b,c\in\mathcal A$.
\end{defn}
\begin{defn}\label{Poisson_algebra}
A \textbf{Poisson algebra} $\mathcal A$ is a commutative associative algebra endowed with a bilinear antisymmetric operation $\pb:\mathcal A\times \mathcal A\to \mathcal A$ such that
\begin{enumerate}
\item $\pb$ satisfies Jacobi identity,

\item $\pb$ satisfies Leibniz identity.

\end{enumerate}

\end{defn}

% \todot{Sharp map/Poisson anchor, subbundle in duality, ???}

%\begin{defn}
%Let M be a Banach manifold. An almost Lie bracket on $\mathcal{F}$ is a %$\mathbb{R}$-bilinear skew-symmetric pairing such that 
%$\{\cdot,\}$ on $\mathcal{F}$ satisfies the Leibniz rule, i.e. for any $f , g, h \in\mathcal{F} $ 
%\[\{f , gh\} = g\{f , h\} + h\{f , g\}.\]
%\end{defn}

%\todob{Find an example of non-localizable Poisson bracket} NOT TODAY

\section{Poisson structures defined for the whole algebra of smooth functions on a Banach manifold}\label{sec:strong}

We start the review of various approaches to Poisson geometry by the definitions of \textbf{Poisson brackets} on the whole algebra $\Ci(M)$ of smooth functions on a Banach manifold $M$. The definition of Poisson bracket on Banach manifolds was first proposed in the paper \cite{OR}. Let us state it here.

\begin{defn}[Odzijewicz--Ratiu \cite{OR}, Definition 2.1]\label{def:OR}
A Banach Poisson manifold is a pair $(M, \pb)$ consisting of a smooth Banach manifold and a bilinear operation $\pb$ satisfying the following conditions:
\begin{enumerate}
    \item[(i)] $\left(\Ci(M), \{\cdot, \cdot\}\right)$ is a Lie algebra;
    \item[(ii)] $\{\cdot,\cdot\}$ satisfies the Leibniz identity on each factor;
    \item[(iii)] the vector bundle map $\sharp: T^*M\to T^{**}M$ covering the identity defined by \[\sharp_m(dh(m))(dg(m)) = \{g, h\}(m), m \in M\] for any locally defined functions $g$ and $h$, satisfies $\sharp(T^*M)\subset TM$.
\end{enumerate}

\end{defn}
% \todob{1 and 2 implies Poisson algebra, not-necessary localizable, no Poisson tensor a priori, separability yes, hamiltonian vector fields for all smooth functions}

\begin{rem}
Conditions (i) and (ii) mean that $\left(\Ci(M), \{\cdot, \cdot\}\right)$ is a Poisson algebra (see Definition~\ref{Poisson_algebra}) and condition (iii) implies the existence of Hamiltonian vector fields.
\end{rem}

Note that Definition~\ref{def:OR} implicitly assumes the existence of a Poisson tensor. However, this is not automatic, even on a Hilbert space (see \cite{BGT}). If the Poisson tensor does not exist, it can happen that the value of the Poisson bracket $\{f,g\}$ of two functions $f$ and $g$ at a given point $m\in M$ depends on higher derivatives than $df(m)$ and $dg(m)$. This type of Poisson bracket has gone by the (diversity-friendly) name of queer Poisson brackets in reference to queer tangent vectors introduced in \cite{michor}, which were used for the construction of explicit examples of queer Poisson brackets. In this case, the map $\sharp$ cannot be defined. More generally, a Poisson bracket may not be localizable, i.e. may depend on the global values of functions, and in this case it is not possible to compute a Poisson bracket of functions which are only locally defined. 

Let us clarify these points. First, let us give the definition of a localizable Poisson bracket, following \cite{BGT}.
\begin{defn}\label{def:localizable}
A Poisson bracket $\pb$ on $M$ is called {\bf localizable} if it has a localization, that is, a family consisting of a Poisson bracket $\pb_U$ on every open subset $U\subseteq M$, which satisfy $\pb_M=\pb$ and are compatible with restrictions, i.e., if $U\subseteq V$ and $f,g\in \Ci(V)$ then $\{f,g\}_V\vert_U=\{f\vert_U,g\vert_U\}_U$. 
\end{defn}

In the finite-dimensional setting, all Poisson brackets are localizable. The usual proof utilizes the notion of bump functions, which are one of the basic tools in finite-dimensional differential geometry, but may not exist in the infinite-dimensional setting. A summary of results related to this topic can be found in \cite[Chapter III]{michor}.
\begin{defn}
By a \emph{bump function} on a Banach space $E$ we will understand a smooth real or complex-valued non-zero function whose support defined as
\[
  \operatorname{supp}f := \overline{ \{ x\in E | f(x)\neq 0 \} }.
\]
is a bounded subset of $E$.
\end{defn}
Choosing appropriate chart, it is straightforward to show that if there exists at least one bump function on the modeling space of a Banach manifold, then for any point $x_0\in M$ and any open neighborhood $U$ of $x_0$, there exists a bump function $\chi$ such that $\chi(x_0)=1$ and $\operatorname{supp}\chi\subset U$.

On many Banach spaces, bump functions do not exist. It is the case, for example, for $\ell^1$ spaces; see \cite{bonic-frampton1966} for the proof and other related results. In the finite-dimensional case, the existence of bump functions follows from the existence of an equivalent norm which is smooth away from the origin. However, in an arbitrary Banach space, there might not exist such a norm, see \cite{restrepo1964}. The existence of such a norm guarantees the existence of smooth bump functions. The construction is similar to the one in finite dimensions, for example they can be obtained by manipulating the function $e^{-\frac{1}{{\norm x}^2}}$, see \cite{ratiu-mta} for the example of explicit construction. Note that in \cite{restrepo1964} it was shown that a separable Banach space admits an equivalent norm of class $C^1$ if and only if its dual is separable. 
In particular, Banach spaces $\ell^1$ or $C([a,b])$ do not have equivalent $C^1$ norms.

\begin{prop}
The existence of bump functions on the modeling spaces of the manifold $M$ implies localizability of Poisson brackets (or more generally of derivations) on $M$.
\end{prop}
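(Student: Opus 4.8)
The plan is to show that a Poisson bracket (or, more generally, a derivation in one argument) defined on $\Ci(M)$ can be restricted consistently to every open set $U \subseteq M$, with the restrictions compatible under further restriction. The key mechanism is the classical localization argument, transplanted to the Banach setting using bump functions in place of the finite-dimensional partition-of-unity machinery.

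First I would reduce the whole statement to a single \emph{support-locality} claim: if $f \in \Ci(M)$ vanishes on an open set $V$, then $\{f,g\}$ vanishes on $V$ for every $g \in \Ci(M)$ (and similarly in the other slot, by antisymmetry, or directly if we only have a derivation). To prove this, fix a point $x_0 \in V$; by hypothesis there is a bump function $\chi \in \Ci(M)$ with $\chi(x_0) = 1$ and $\operatorname{supp}\chi \subset V$ — this is where the existence of bump functions on the modeling spaces is used, via the chart argument already noted in the excerpt. Then $(1-\chi) f = f$ on a neighborhood of $x_0$, in fact $(1-\chi)f = f$ everywhere since $f$ vanishes on $\operatorname{supp}\chi$. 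Now apply the Leibniz rule: $\{f,g\} = \{(1-\chi)f, g\} = (1-\chi)\{f,g\} + f\{1-\chi,g\} = (1-\chi)\{f,g\} - f\{\chi,g\}$. Evaluating at $x_0$ and using $\chi(x_0)=1$ and $f(x_0)=0$ gives $\{f,g\}(x_0) = 0$. Since $x_0 \in V$ was arbitrary, $\{f,g\}$ vanishes on $V$.

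Next I would use support-locality to define the localization $\pb_U$. Given $f,g \in \Ci(U)$ and a point $x_0 \in U$, choose a bump function $\chi$ with $\chi(x_0) = 1$ and $\operatorname{supp}\chi \subset U$; then $\chi f$ and $\chi g$ (extended by zero) lie in $\Ci(M)$, and I set $\{f,g\}_U(x_0) := \{\chi f, \chi g\}(x_0)$. Independence of the choice of $\chi$ follows from support-locality: if $\chi'$ is another such bump function, then $\chi f - \chi' f$ vanishes near $x_0$, hence $\{\chi f - \chi' f, \chi g\}$ vanishes at $x_0$ by the claim, and similarly in the second slot, so the two definitions agree at $x_0$. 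This defines $\{f,g\}_U$ as a function on $U$; smoothness is checked locally near each point by working with a fixed $\chi$ on a neighborhood. That $\pb_U$ is again a Poisson bracket (bilinearity, antisymmetry, Leibniz, Jacobi) is inherited pointwise from $\pb$ on $M$ through the local identifications. Compatibility with restrictions, $\{f,g\}_V|_U = \{f|_U, g|_U\}_U$ for $U \subseteq V$, is immediate from the pointwise definition, since at a point $x_0 \in U$ both sides equal $\{\chi f, \chi g\}(x_0)$ for a bump function supported in $U$. Finally $\pb_M = \pb$ because for $x_0 \in M$ and $\chi \equiv 1$ near $x_0$ one gets $\{\chi f, \chi g\}(x_0) = \{f,g\}(x_0)$ again by support-locality.

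The main obstacle is entirely concentrated in the first step: guaranteeing that a bump function with the required properties (value $1$ at a prescribed point, support inside a prescribed neighborhood) actually exists. On the modeling Banach space this is the content of the chart-localization remark preceding the proposition, so the hypothesis delivers exactly what is needed; once support-locality is in hand, everything downstream is the routine finite-dimensional bookkeeping. For the ``more generally... derivations'' parenthetical, the same argument applies verbatim to any $\R$- or $\C$-linear map $D\colon \Ci(M) \to \Ci(M)$ satisfying the Leibniz rule $D(ab) = (Da)b + a(Db)$, since only the Leibniz rule and the bump-function construction were used to establish support-locality and to build the restriction $D_U$; no use of antisymmetry or the Jacobi identity is made in that part of the argument.
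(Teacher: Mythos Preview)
Your argument is correct and follows essentially the same path as the paper's: both use a bump function together with the Leibniz rule to show that $\{f,g\}(x_0)$ depends only on the values of $f$ and $g$ near $x_0$, and then build the localized brackets $\pb_U$ from this. The only imprecision is in your independence step, where ``$\chi f - \chi' f$ vanishes near $x_0$'' requires the bump functions to be identically $1$ on a neighborhood of $x_0$ rather than merely satisfying $\chi(x_0)=1$; you tacitly invoke this stronger property at the end, and it is easily arranged once any bump function exists (compose with a smooth plateau function on $\mathbb{R}$).
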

\begin{proof}
The first observation is that constant functions are Casimirs. Indeed, it is a consequence of the Leibniz rule since for the constant function $1$ and any function $f$, one has
\[ \{f, 1\}(x) = \{f, 1\cdot 1\}(x) = 1(x)\{f, 1\}(x) + 1(x)\{f, 1\}(x) = 2\{f,1\}(x) = 0.\]
Consider a smooth bump function $\chi$ such that $\chi(x_0)=1$. From the previous observation, it follows that for any functions $f,g$, we have
\[ \{ f, g \}(x) = \{ f - f(x_0), g - g(x_0) \}(x).\]
Thus without loss of generality, we can assume that $f(x_0)=g(x_0)=0$. Now
\[ \{ f, g\chi\}(x_0) = \{f,g\}(x_0)\chi(x_0) + \{ f,\chi\}(x_0)g(x_0) = \{f,g\}(x_0). \]
In effect, $\{f,g\}(x_0)$ depends only on the values of $f,g$ in an arbitrary neighborhood of $x_0$. It allows defining a Poisson bracket $\pb_U$ in a unique way by extending smooth functions on $U$ to smooth functions on the whole manifold $M$.
\end{proof}

To our knowledge the problem of the existence of non-localizable Poisson brackets on Banach manifolds is still open, see a related discussion in \cite{pelletier}. However, as mentioned above, even assuming that the Poisson bracket is localizable, it does not follow that $\{f, g\}(m)$ depends only on $df(m)$, $dg(m)$, which means that the sharp map $\sharp$ in Definition~\ref{def:OR} might still be ill defined. If the value of the Poisson bracket $\{f, g\}(m)$ depends only on $df(m)$, $dg(m)$, then one can define a Poisson tensor, i.e. a bilinear antisymmetric map $\pi_m:T_p^*M\times T_p^*M \to \R$ such that
\[ \pi_m (df(m),dg(m)) = \{f,g\}(m). \]
Note that the existence of a Poisson tensor implies almost localizability of the associated Poisson bracket: $\pb_U$ might turn out to not satisfy the Jacobi identity, as differentials of globally defined smooth functions at some point $m\in M$ might not span the whole $T^*_mM$. On the other hand, the differentials of the locally defined smooth functions span the whole $T^*_mM$. In effect, the Jacobi identity might not hold for all locally defined functions.

It was shown in \cite{BGT} that, even on the simplest Banach manifold given just as a Hilbert space, there exist localizable Poisson brackets, which depend on higher derivatives of functions (see also the construction of queer operational vector fields in \cite{michor}), hence cannot be given by a Poisson tensor. There is another situation, where Poisson brackets might not be defined by Poisson tensors. Namely, if one considers Poisson brackets on manifolds of class $C^k$, it is known that the space of derivations (corresponding to operational vector fields) is larger than the tangent bundle (corresponding to kinematic vector fields), see e.g. \cite{papy55,newns56}.
% \todop{Reflexivity implies strong Poisson structure:  On reflexive Banach spaces with bornological approximation property, the space of operational tangent vectors coincide with the space of kinematic tangent vectors(\cite{michor}, Theorem 28.7). Hence Hamiltonian vector field exists for each smooth function on the Banach manifold. Hence we have strong Poisson structure, i.e. Poisson bracket defined on all smooth functions( NST's definition). And if the Poisson structure is invertible, we get the usual strong symplectic structure on the cotangent bundle of the Banach space.}
% \todot{I guess that more precisely you want to say that for reflexive spaces with born.approx, in Def O-R point (iii) and existence of $\pi$ are automatic?}
% \todop{Yes!}
% \todop{Is the space of reflexive Banach spaces with born.approx large enough ? what are the examples? }
% \todot{I hoped you would know! ;) Have no idea, apparently Hilbert spaces are excluded. I didn't go through list in \cite{michor} carefully enough. 6.14 doesn't list any example.}
% \todop{idk, but this class is very similar to Nuclear Frechet, so i was expecting everything to go nicely}
% \todop{do lp not have bor. approx?}
% \todot{There are queer vector fields on $l^2$, so at least this one doesn't. They exist also for $1<p\leq 2$, but not for $p>2$ (at least not of order 2), see Prop 1 in \cite{BGT}}
These observations motivated the authors of \cite{BGT} to reformulate Definition~\ref{def:OR} from \cite{OR} in the following way:

\begin{defn}[Belti\c t\u a--Goliński--Tumpach \cite{BGT}]\label{BGT}
A Banach Poisson manifold $(M,\pb)$ is a Banach manifold $M$ equipped with a localizable Poisson bracket $\pb$ for which there exists a Poisson tensor such that its corresponding sharp map $\sharp$ satisfies
\[ \sharp(T^*M)\subset TM.\]
\end{defn}

\begin{rem}
    A similar definition, using a Poisson anchor as a fundamental object, was given in \cite{pelletier12}. This approach was developed further in a more general setting, and we will present it in the next section.
\end{rem}
The definition presented in this section is suitable for a certain class of problems. It includes all Poisson brackets given by strong symplectic forms. Most examples not given by a strong symplectic form are Banach Lie--Poisson spaces, also introduced in \cite{OR}. For instance, there is a wide class of Poisson structures on predual spaces to $W^*$-algebras \cite{OR}. This approach allows to describe quantum mechanics in the framework of Hamiltonian mechanics in the spirit of \cite{bona}, but also it is useful e.g. to describe multidiagonal infinite Toda-like systems \cite{Oind}, or systems related to restricted Grassmannian \cite{GO-grass,GT-momentum}.

However, at the same time, this approach may be too restrictive and too rigid to include other interesting cases, such as weak symplectic manifolds or some Banach Poisson--Lie groups related to integrable systems (\cite{tumpach-bruhat}). For these reasons, in the next section, we will present a review of possible approaches to Poisson structures defined only on some subset of $\Ci(M)$ and discuss the differences between them.

\section{Poisson structures defined for subalgebras of smooth functions on a Banach manifold}\label{sec:weak}

In this section, we present three different methods for defining a Poisson structure on Banach manifolds that encompasses weak symplectic structures.
The first one presented in Section~\ref{sec:bundle_map} starts the construction by defining a bundle morphism $P$ from a subbundle $T^\flat M$ of the cotangent bundle $T^*M$ to $TM$. This bundle map is related to anchor maps in the theory of algebroids \cite{cabau-pelletier_book}. 
%\todob{add references here}
In the second method (Section~\ref{sec:bracket}), the notion of a Poisson algebra is central, and the construction starts with the definition of a Poisson bracket on an associative subalgebra $\mathcal{A}$ of the algebra $\mathcal{C}^\infty(M)$ of smooth functions on a Banach manifold $M$.
The last method presented in Section~\ref{sec:poisson_tensor}, is based on the notion of a Poisson tensor. A comparison of these different approaches will be given in Section~\ref{sec:comparison}.

%\todot{State someplace clearly that indeed all definitions "encompasses weak symplectic structures"}
%\todob{For Schouten-bracket of Poisson tensor as tensor, equation (7) in Almost  Lie structures on an anchored Banach bundle by Pelletier and Cabau may be the good one to look at}

%\todop{If Hamiltonian vector fields exist, we can show that $Jacobi(f,g,h)= (\mathcal{L}_{X_f}\pi)(dg, dh)$. Hence $Jacobi(f,g,h)$ only depends upon $df, dg, dh$, implies there exists $[\pi,\pi]\in\Gamma( \bigwedge^3(\mathbb{F}^*))$ such that $[\pi,\pi](df, dg, dh)=Jacobi(f,g,h)$. $\mathbb{F}$ is the subbundle generated by $df$. Note that taking the Lie derivative of the bivector field makes sense since we have the existence of unique local flow corresponding to vector fields on Banach manifold}
%\todob{Indeed this is now included in remark 15}

\subsection{Poisson structures defined using bundle morphisms $P : T^\flat M \rightarrow  TM$}\label{sec:bundle_map}

In this section, we recall the definition of  \textbf{sub Banach Lie–Poisson manifolds} (sub P-manifolds for short) given in \cite{pelletier}, which was extended under the name of \textbf{partial Poisson manifolds} to convenient manifolds modeled on convenient spaces in \cite{pelletier19} and \cite[Chapter~7]{cabau-pelletier_book}. In particular, in the case of Banach manifolds, the notion of sub Banach Lie–Poisson manifolds is a particular example of partial Poisson manifolds (see Remark~\ref{sub_partial} below).  
In this approach, the main ingredient is a bundle morphism $P$ from a subbundle of the cotangent bundle $T^*M$ to the tangent bundle $TM$. 
%Let us mentioned that in \cite{pelletier}, the definition of almost sub-Poisson manifold is also considered, where Jacobi identity is not assumed. 

\begin{defn}[Cabau--Pelletier \cite{pelletier}, Section 4.1]\label{sub_Poisson}
Let $q_M^\flat: T^\flat M \to M$ be a Banach subbundle of the cotangent bundle $q_M: T^* M \to M$. 
\begin{itemize}
\item A bundle morphism $P : T^\flat M \rightarrow  TM$ will be
called a \textbf{sub almost Poisson morphism}
(sub AP-morphism for short) if P is skew-symmetric relatively 
 to the duality pairing, i.e. \[\langle\alpha,P\beta\rangle = -\langle\beta, P\alpha\rangle\] for any $\alpha,\beta\in T^\flat M$. 

\item A sub almost Poisson morphism $P : T^\flat M \rightarrow  TM$ allows to define a $\R$-bilinear skew-symmetric pairing $\{ \cdot, \cdot\}_P$ on sections of $T^\flat M$ by 
\[
\{ \alpha, \beta\}_P = \langle \beta, P\alpha\rangle, \alpha, \beta\in\Gamma(T^\flat M).
\]
In particular, for any smooth function $f\in \Ci(M)$, one has
\begin{equation}\label{module}
\{ f\alpha, \beta\}_P = \langle \beta, f P\alpha\rangle = f \langle \beta, P\alpha\rangle.
\end{equation}

\item On the subalgebra \[\mathcal{F}^\flat = \{ f\in \Ci(M), df \in \Gamma(T^\flat M)\}\] of $\Ci(M)$, one can define a skew-symmetric bilinear map $\pb$,
%\[
%\{\cdot, \cdot\}: \mathcal{F}^\flat \times \mathcal{F}^\flat \rightarrow \Ci(M),
%\]
which, by \eqref{module},  satisfies Leibniz rule:
\begin{equation}\label{def_bracket_via_morphism}
\{ f, g\} = \{df , dg\}_P = \langle df, P dg\rangle, 
\end{equation}
where $f, g\in \mathcal{F}^\flat$.
\end{itemize}
\end{defn}

It is not clear at first sight that the skew-symmetric bilinear map $\{\cdot, \cdot\}$ defined by \eqref{def_bracket_via_morphism} actually takes values in $\mathcal{F}^\flat$. However, this follows from the fact that $P$ takes values in $TM$ and from the fact that the inner product $df(X)$ of the differential of a function $f$ in $\mathcal{F}^\flat$ with any vector field $X$ on $M$ is again a function in $\mathcal{F}^\flat$. For a proof of this result, we refer to the following propositions.
\begin{prop}[Cabau--Pelletier \cite{pelletier19}, Lemma 2.1.1 or Cabau--Pelletier  \cite{cabau-pelletier_book}, Proposition 7.1]\label{prop:flat}
The bilinear map $\{\cdot, \cdot\}$ defined by \eqref{def_bracket_via_morphism} takes values in $\mathcal{F}^\flat$.
\end{prop}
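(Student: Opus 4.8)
The plan is to obtain the proposition from the assertion — made in the paragraph just before it, and which I may take as given — that for every $f\in\mathcal{F}^\flat$ and every vector field $X\in\Gamma(TM)$ the derivative $\langle df,X\rangle = df(X)$ of $f$ along $X$ again belongs to $\mathcal{F}^\flat$. Granting this, the proof of the proposition is essentially a one-liner: since $P$ is a bundle morphism with values in $TM$ and $dg$ is a smooth section of $T^\flat M\subseteq T^*M$, the composition $X_g:=P\circ dg$ is a genuine smooth vector field on $M$, so by \eqref{def_bracket_via_morphism} we have $\{f,g\}=\langle df,P\,dg\rangle=\langle df,X_g\rangle\in\mathcal{F}^\flat$. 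The real content therefore sits in the quoted assertion, and the rest of what I would write is a sketch of how one establishes it.

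To prove $\langle df,X\rangle\in\mathcal{F}^\flat$ I would argue in a chart $U$, modeled on a Banach space $E$, over which $T^\flat M$ is trivialized by a Banach space $E^\flat$ continuously included in $E^*$, its transition functions being the restrictions of those of $T^*M$; in this picture $f\in\mathcal{F}^\flat$ says exactly that $u\mapsto Df(u)$ is smooth as a map $U\to E^\flat$. Differentiating $\langle df,X\rangle(u)=\langle Df(u),X(u)\rangle$ gives
\[
d\langle df,X\rangle(u)[h]=\langle D^2f(u)[h],X(u)\rangle+\langle Df(u),DX(u)[h]\rangle,\qquad h\in E .
\]
For the first summand one uses that, $Df$ being smooth into $E^\flat$, the second derivative $D^2f(u)$ lies in $L(E;E^\flat)$; together with the symmetry of $D^2f(u)$ this rewrites the first summand as the evaluation on $h$ of the element $D^2f(u)[X(u)]\in E^\flat$, and $u\mapsto D^2f(u)[X(u)]$ is smooth into $E^\flat$, so this term is a section of $T^\flat M$.

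The second summand $\langle Df(u),DX(u)[h]\rangle=\bigl(DX(u)^{*}Df(u)\bigr)[h]$ is the delicate point and the one I expect to be the main obstacle: one must show that $DX(u)^{*}$ maps $E^\flat$ into itself, smoothly in $u$ — equivalently, recalling $d\langle df,X\rangle=\mathcal{L}_X(df)$, that $\mathcal{L}_X$ preserves $\Gamma(T^\flat M)$. A bare single-chart computation does not settle this; instead I would exploit that $T^\flat M$ is a Banach subbundle of $T^*M$, so that membership in $T^\flat M$ is a chart-independent notion and its transition functions are the restrictions of the cotangent ones, and combine this with the structural hypotheses of the Cabau--Pelletier framework tying $T^\flat M$ to $P$ in order to reduce the claim to the form already checked for the first summand. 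Specializing the resulting statement to $X=X_g=P\,dg$ then yields the proposition.
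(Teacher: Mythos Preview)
Your reduction follows the paper's own one-sentence explanation, and you are right to locate the obstruction in the second summand $(DX(u)^*Df(u))[h]$. But the claim that $\langle df,X\rangle\in\mathcal{F}^\flat$ for \emph{every} vector field $X$ is false, so this gap cannot be closed along the lines you propose: on $M=E=\ell^2$ with $E^\flat\subset E^*=\ell^2$ the closed complemented subspace of odd-indexed sequences, the function $f(x)=\tfrac12 x_1^2$ has $Df(x)=x_1e_1\in E^\flat$, yet for $X(x)=x_2e_1$ one gets $\langle df,X\rangle(x)=x_1x_2$ with differential $x_2e_1+x_1e_2\notin E^\flat$ whenever $x_1\neq0$. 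No amount of subbundle bookkeeping forces $DX(u)^*$ to preserve $E^\flat$ for an arbitrary $X$; the sentence preceding the proposition is misleading on this point, and your hand-wave about ``structural hypotheses of the Cabau--Pelletier framework tying $T^\flat M$ to $P$'' cannot be made precise because the problematic term involves only $X$, not $P$.

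The paper itself gives no proof but defers to the cited references, where the argument uses that $X=P\,dg$ is \emph{not} arbitrary: one exploits the skew-symmetry of $P$ together with the fact that \emph{both} $f$ and $g$ lie in $\mathcal{F}^\flat$. In a chart (take $P$ constant to isolate the idea) the troublesome summand becomes $\langle Df(u),\,P\,D^2g(u)[h,\cdot]\rangle$; skew-symmetry rewrites it as $-\langle D^2g(u)[h,\cdot],\,P\,Df(u)\rangle=-D^2g(u)[h,P\,Df(u)]$, and then symmetry of $D^2g$ together with smoothness of $Dg:U\to E^\flat$ shows this is $-D^2g(u)[P\,Df(u),\cdot]\in E^\flat$ evaluated at $h$ --- exactly parallel to your treatment of the first summand. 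This swap, not a property of $\mathcal{L}_X$ for general $X$, is the step your sketch lacks.
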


\begin{defn}\label{subPoisson}
 When the bracket $\{\cdot, \cdot\}:  \mathcal{F}^\flat \times \mathcal{F}^\flat \rightarrow \mathcal{F}^\flat$ defined by \eqref{def_bracket_via_morphism} satisfies Jacobi identity, $\left(\mathcal{F}^\flat, \{\cdot, \cdot\}\right)$ has a Lie algebra structure, and we say that $\left(\mathcal{F}^\flat, \{\cdot, \cdot\}\right)$ is a \textbf{sub Banach Lie--Poisson manifold} (sub P-manifold for short).
\end{defn}

% \todop{$\mathcal{F}^\flat$ may not be closed with respect to $\{\cdot,\cdot \}$. $d\{f,g\}\in \Gamma(T^*M)$ but might not be in $\Gamma(T^\flat M)$. In \cite{pelletier19} it is assumed in the definition of the subalgebra $\mathcal{A}$(M) ( In C-P it is $\mathcal{F}^\flat$) that for $f\in \mathcal{F} \, \exists$ $s\in \Gamma(T^\flat M)$ such that $i \circ s= df$, where $i: T^\flat M \to T^* M$ is the canonical bundle morphism. So I think it should be included as an assumption}
% \todob{it was not clear from \cite{pelletier} that $\mathcal{F}^\flat$ is closed with respect to $\{\cdot,\cdot \}$. But follows from the corresponding Propositions in \cite{pelletier19}}

% \todob{$i \circ s= df$ is the same as $df \in\Gamma(T^\flat M)$ }
%\todop{Checked the proof, now I am convinced}

%\todob{Change following remark to the more precise statements on pages 319 and 320 in \cite{cabau-pelletier_book}}

%\todob{find the precise proof in Magri of the above fact: F. Magri, C. Morosi, A geometrical characterization of integrable Hamiltonian systems through the theory of Poisson–Nijenhuis manifolds, Quaderno
%S 19, Università degli studi di Milano, 1984 \cite{magri84}}

%\todob{Probably the good recent reference is the book by Cabau-pelletier \cite{cabau-pelletier_book}}

%\todop{do you have an english translation as well?}

%\todop{that's true, now I understood }

In order to compare previous definitions to their analogues in the more general setting of convenient manifolds \cite{michor}, below we recall the notion of partial Poisson manifolds introduced in \cite{pelletier19} and \cite{cabau-pelletier_book} for convenient manifolds.
\begin{defn}[Pelletier--Cabau \cite{pelletier19}, Section 2.1, \cite{cabau-pelletier_book}, chapter 7]\label{partial_Poisson} Let $M$ be a convenient manifold modelled on a convenient space $\mathbb{M}$ (cf. \cite{michor}). Let $p_M: TM \to M$ be it's kinematic tangent bundle and $p^*_M: T^*M \rightarrow M$ be it's kinematic cotangent bundle.
\begin{itemize}
    \item A vector subbundle $p': T^\flat M \to M$ of $p^*_M: T^*M \to M$ will be called weak subbundle if  $p'$ is a convenient bundle and the canonical injection $i:T^\flat M \to T^*M$ is a convenient bundle morphism. 
    \item Let $\mathcal{F}^\flat $ be the subalgebra of $\Ci(M)$ consisting of smooth functions $f: M\rightarrow \mathbb{R}$ for which  there exists $s\in \Gamma(T^\flat M)$ such that $i\circ s=df$. In other words, $\mathcal{F}^\flat $ is the subalgebra of smooth real functions on $M$ whose differentials are sections on $T^\flat M$.
    \item A convenient bundle morphism $P : T^\flat M \rightarrow  TM$ will be
called a \textbf{almost Poisson anchor}
if P is skew-symmetric relatively 
 to the duality pairing, i.e. \[\langle\alpha,P\beta\rangle = -\langle\beta, P\alpha\rangle\] for any $\alpha,\beta\in T^\flat M$. 

    \item For an almost Poisson anchor $P: T^\flat M \to TM$, we can define a almost-Lie bracket $\{ \cdot,\cdot \}: \mathcal{F}^\flat\times \mathcal{F}^\flat \to \mathcal{F}^\flat$ (i.e. a skew symmetric bilinear map satisfying Leibniz rule) on $\mathcal{F}^\flat$ by
    \begin{equation}\label{def_bracket_anchor}
\{ f, g\} = \langle df, P dg\rangle, f, g\in \mathcal{F}^\flat.
\end{equation}
\item We say that $\left(T^\flat M, M, P, \{\cdot, \cdot\} \right)$ is a \textbf{partial Poisson structure} on $M$ if the bracket $\{\cdot, \cdot\}$ defined by \eqref{def_bracket_anchor} satisfies the Jacobi identity
\[
\{f, \{g, h\}\} + \{g, \{h, f\}\} + \{h, \{f, g\}\} = 0,
\]
for any $f, g, h \in \mathcal{F}^\flat$. In this case $P$ is called a \textbf{Poisson anchor} and we say that $\left(M, \mathcal{F}^\flat, \{\cdot, \cdot\}\right)$ is a \textbf{partial Poisson manifold}.
    \end{itemize} 
\end{defn}

\begin{rem}\label{sub_partial}
    When we apply \textbf{Definition~\ref{partial_Poisson}} in the Banach setting, we observe that, unlike \textbf{Definition~\ref{sub_Poisson}}, the subbundle $T^\flat M$ is not necessarily assumed to be a Banach subbundle of $T^* M$. Hence \textbf{Definition~\ref{sub_Poisson}} is a particular case of \textbf{Definition~\ref{partial_Poisson}}.
\end{rem}

\begin{rem}
    In \cite{cabau-pelletier_book}, the subalgebra, on which the Poisson bracket is defined, is the set 
    $ \mathfrak{A}(U) $ of smooth functions $f\in \Ci(U)$ on an open set $U\subset M$ such that, for any $x\in U$,
    each iterated derivative $d^kf(x)$%\in L^k_{sym}(T_xM, \mathbb{R}) 
    ($k\in \mathbb{N}^*$) satisfies:
    \[
    \forall (u_2, \dots, u_k)\in (T_xM)^{k-1}, d^k_x(\,\cdot\,, u_2, \dots, u_k)\in T_x^\flat M
    \]
    (see \cite[Definition 7.2]{cabau-pelletier_book}). However, it follows from \cite[48]{michor}, that $\mathfrak{A}(U)$ coincides with sections of $T^\flat M$ over $U\subset M$ (see also the proof of Lemma 2.1.1 in \cite{pelletier19}).
\end{rem}
\begin{rem}
As mentioned in \cite{pelletier} and \cite{cabau-pelletier_book} (see also \cite[Appendix B.1]{magri84} for the corresponding algebraic formulas), to an almost Poisson morphism $P : T^\flat M \rightarrow  TM$, one can associate an almost bracket $[\cdot, \cdot]_P$ on sections of $T^\flat M$ by the following formula (see \cite[Proposition~7.3(1)]{cabau-pelletier_book}):
\begin{equation}\label{bracket_one_forms}
    [\alpha, \beta]_P = \mathcal{L}_{P(\alpha)}\beta - \mathcal{L}_{P(\beta)}\alpha - d\langle \alpha, P(\beta)\rangle,
\end{equation}
where $\alpha, \beta \in \Gamma(T^\flat M)$. Note that identities $d(df) = d(dg) = 0$ and the Cartan formula for the Lie derivative $\mathcal{L}_X = \iota_X d + d \iota_X$ imply that
\[
    [df, dg]_P = d\{f, g\}
\]
for functions $f, g\in\mathcal{F}^\flat$. In particular, one can define a tensor of type $(2,1)$ denoted by $[\![P, P]\!]: \Gamma(T^\flat M)\times \Gamma(T^\flat M)\rightarrow \Gamma(T M)$ by
\begin{equation}\label{equ_7_in_CP12}
[\![P, P]\!](\alpha, \beta) = P\left([\alpha, \beta]_P\right) - [P\alpha, P\beta], \alpha, \beta \in \Gamma(T^\flat M),
\end{equation}
where $[\cdot, \cdot]_P$ is defined by \eqref{bracket_one_forms} and where $[\cdot, \cdot]$ denotes the bracket of vector fields. 
The fact that equation~\eqref{equ_7_in_CP12} indeed defines a tensor follows from computations in local coordinates (see Proposition~3.19, 1. page 123 in \cite{cabau-pelletier_book}).
In terms of the \textbf{Poisson tensor} $\pi: T^\flat M\times T^\flat M \rightarrow \mathbb{R}$ naturally associated to the almost Poisson morphism $P$ by 
\begin{equation}\label{pi_CP}
\pi(\alpha, \beta) = \langle \beta, P\alpha\rangle,
\end{equation}
one can consider the $(3, 0)$ tensor $[\![\![\pi, \pi]\!]\!]: T^\flat M \times T^\flat M\times T^\flat M \rightarrow \mathbb{R}$ defined by
\begin{equation}
    [\![\![\pi, \pi]\!]\!](\alpha, \beta, \gamma) = \langle \gamma, [\![P, P]\!](\beta, \alpha)\rangle
\end{equation}
also called the \textbf{Schouten bracket} of $\pi$ (see Section~7.2.2 in \cite{cabau-pelletier_book}).
According to Theorem~7.2 in \cite{cabau-pelletier_book}, the Jacobi identity for the bracket $\{\cdot, \cdot\}$ defined on $\mathcal{F}^\flat$ by \eqref{def_bracket_via_morphism} is equivalent to the fact that the tensor $ [\![\![\pi, \pi]\!]\!]$ vanishes identically. Remark that the identity $[\![\![\pi, \pi]\!]\!] \equiv 0$ a priori only implies that $[\![P, P]\!]$ takes values in
\[
\left(T^\flat M\right)^0 = \{ X \in TM \textrm{ such that } \langle \gamma, X\rangle = 0 \textrm{ for all  } \gamma\in T^\flat M\}.
\]
\end{rem}

\subsection{Poisson structures defined using Poisson algebras}\label{sec:bracket}
Here we present two different definitions of Poisson structures using the notion of Poisson algebra (see Definition~\ref{Poisson_algebra}). Their difference is subtle and will be discussed below.

\begin{defn}[Neeb--Sahlmann--Thiemann \cite{neeb14}, Definition 2.1]\label{NST}
Let $M$ be a smooth manifold modeled on a locally convex space. A \textbf{weak Poisson structure} on $M$ is a unital subalgebra $\mathcal A\subset \Ci(M,\R)$, i.e., it contains the constant functions and is closed under pointwise multiplication, with the following properties:
\begin{enumerate}
    \item[(P1)] $\mathcal A$ is endowed with a \textbf{Poisson bracket} $\pb$, this means that it is a Lie bracket, i.e.
    \[
\{f, g\} = -\{g, f\},\quad \{f, \{g, h\}\}+ \{g, \{h, f\}\}+ \{h, \{f, g\}\} = 0
    \]
    and it satisfies the Leibniz rule
    \[
\{f, gh\} = \{f, h\}h + g\{f, h\}.
    \]
    \item[(P2)] For every $m \in M$ and $v \in T_mM$ satisfying $dF(m)v = 0$ for every $F \in \mathcal A$, we have $v=0$. 
    \item[(P3)] For every $H \in \mathcal A$, there exists a smooth vector field $X_H \in \Gamma(TM)$ with $X_H F = \{F, H\}$ for $F\in\mathcal A$. It is called the Hamiltonian vector field associated with $H$.
\end{enumerate}
If (P1–P3) are satisfied, then we call the triple $(M, \mathcal A,\pb)$ a \textbf{weak Poisson manifold}.
\end{defn}

\begin{rem}
Note that by virtue of condition (P2), the Hamiltonian vector field associated with $H\in\mathcal A$ is unique.
\end{rem}

% Consider a unital subalgebra $\mathcal A \subset \Ci(M)$ of smooth functions on a Banach manifold $M$, i.e. $\mathcal A$ is a vector subspace of $\Ci(M)$ containing the constants and stable under pointwise multiplication. An $\R$-bilinear operation $\pb : \mathcal A \times \mathcal A \to \mathcal A$ is called a Poisson bracket on $M$ if it satisfies: anti-symmetry, Jacobi identity, Leibniz formula.
\begin{defn}[Belti\c t\u a--Odzijewicz \cite{beltita-odzijewicz}, Definition 2.1]\label{BO}
A Poisson structure on $M$ is a Lie algebra $(\mathcal P^\infty(M, \R), \pb)$, where $\mathcal P^\infty(M, \R)$ is an associative subalgebra of the algebra $\Ci(M, \R)$ of real smooth functions with fixed $\R$-linear map $\sharp : \mathcal P^\infty(M, \R) \to \Gamma(TM)$ such that:
\begin{enumerate}
    \item for any $f,g \in \mathcal P^\infty(M, \R)$ one has 
    \[
    (\sharp f)(g) = \{f, g\}.
    \]
    \item the algebra $\mathcal P^\infty(M, \R)$ separates vector fields $\xi\in \Gamma(TM)$ on $M$, i.e., if $\xi(f) = 0$ for any $f\in\mathcal P^\infty(M, \R)$ then $\xi \equiv 0$.
\end{enumerate}
\end{defn}

\begin{rem}
    Let us emphasize that the separability condition (P2) in Definition~\ref{NST} differs from the separability condition 2. in Definition~\ref{BO}. Indeed, in Definition~\ref{BO}, the Poisson algebra separates vector fields, whereas in Definition~\ref{NST} the differentials of functions in the Poisson algebra separate points in the tangent bundle. In Banach manifolds without bump functions, these two conditions may not coincide in general.
    %We will come back to this distinction in section~\ref{sec:bundle-algebra}.
    
\end{rem}
%\todob{separating globally defined vector fields is less restrictive than separating vectors, since they may not be globally defined section of the tangent bundle (hence nothing to separate)}

   \subsection{Poisson structures defined using Poisson tensors}\label{sec:poisson_tensor}
In this section, the fundamental object is chosen to be a Poisson tensor. We give below two different definitions in this context. Note that \textbf{generalized Banach Poisson manifolds} were called \textbf{Banach sub-Poisson manifolds} in \cite[Definition 3]{GT-u-bial}.  
 %  \todop{In \cite{tumpach-bruhat}, 6.3, It is claimed that for Poisson Lie groups, the $Jacobi(\alpha, \beta, \gamma)$ defines a tensor. Is it true for manifolds as well?.}

%\todob{This is something we should prove at some point...Perhaps the pages 319-320 in \cite{cabau-pelletier_book} works somehow for $P$ with values in $T^{**}M$?}
%\todop{yeah, only problem I see is, the formula (Page no 320, (1)) of $[\pi,\pi](\sigma_1, \sigma_2, \sigma_3)$ involves the Lie derivative of one form $\sigma_2$ with respective to $P(\sigma_1)$, $\mathcal{L}_P(\sigma_1)\sigma2$. But when P takes values in $T^{**}M$, how to take Lie derivative of a one-form with respect to $\Gamma( T^{**}M)$?}

%\todop{The Cartan's magic formula doesn't work, right?}
%\todob{yeah, I arrived to the same coclusion. I do ot know how to go around it now}
\begin{defn}[Tumpach, \cite{tumpach-bruhat}, Definition~3.4]\label{subbundleduality} 
We will say that $\mathbb{F}$ is a subbundle  of $T^*M$ \textbf{in duality} with the tangent bundle of $M$ if, for every $m\in M$, 
\begin{enumerate}
\item $\mathbb{F}_m$ is an injected Banach space of $T_m^*M$, i.e. $\mathbb{F}_m$ admits a Banach space structure such that the injection $\mathbb{F}_m\hookrightarrow T_m^*M$ is continuous,
\item the natural duality pairing between $T_m^*M$ and $T_mM$ restricts to a duality pairing between $\mathbb{F}_m$ and $T_mM$, i.e. $\mathbb{F}_m$ separates points in $T_mM$.
\end{enumerate}
\end{defn}

\begin{defn}[Tumpach \cite{tumpach-bruhat}, Definition 3.5, Definition 3.7]\label{T}
Let $M$ be a Banach manifold and $\mathbb{F}$ be a subbundle of $T^*M$ in duality with TM. Denote by $\Lambda^2 \mathbb F^*$ the vector bundle over M whose fiber over $m\in M$ is the Banach space of real continuous skew-symmetric bilinear maps on the subspace $\mathbb{F}_m$ of $T_m^*M$. 

\begin{enumerate}
    \item
A smooth section $\pi$ of $\Lambda^2 \mathbb F^*$ satisfies Jacobi identity iff:
\begin{enumerate}
    \item for any closed local sections $\alpha, \beta$ of $\mathbb F$, the differential $d(\pi(\alpha, \beta))$ is a local section of $\mathbb F$;
    \item for any closed local sections $\alpha$, $\beta$, one has
    $\gamma$ of $\mathbb{F}$,
    \begin{equation}\label{jacobi_tensor}
      \pi\left(\alpha, d\left(\pi(\beta, \gamma)\right)\right) + \pi\left(\beta, d\left(\pi(\gamma, \alpha)\right)\right) + \pi\left(\gamma, d\left(\pi(\alpha, \beta)\right) \right)= 0.
    \end{equation}
\end{enumerate}
In this case, it is called a \textbf{Poisson tensor} on $M$ with respect to $\mathbb{F}$.
\item A \textbf{generalized Banach Poisson manifold} is a triple $(M, \mathbb F, \pi)$ consisting of a smooth Banach manifold $M$, a subbundle $\mathbb F$ of the cotangent bundle $T^*M$ in
duality with $TM$, and a \textbf{Poisson tensor} $\pi$ on $M$ with respect to $\mathbb F$.
\end{enumerate}
\end{defn}
%\todop{It should be bilinear 'forms' on the subspace $\mathbb{F}_p$ of $T_p^*M$.}
%\todob{at a point, this are just bilinear maps between Banach spaces}

\begin{rem}
Given a Poisson tensor on a Banach manifold $M$ with respect to a subbundle $\mathbb{F}$, one defines a Poisson bracket on the space of locally defined functions with differentials in $\mathbb{F}$ by
\begin{equation}\label{bracket_tensor}
    \{ f, g\} = \pi(df, dg).
\end{equation}
Condition~1(a) in Definition~\ref{T} ensures that the bracket of two such functions is again a function with differentials in $\mathbb{F}$,
and condition~1(b) is equivalent to the usual Jacobi identity. Consequently, the space of smooth functions on $M$ with differentials in $\mathbb{F}$ forms a Poisson algebra. Note that the existence of Hamiltonian vector fields is not generally assumed. In Example~\ref{ex:u1} we give an example of Poisson structure in the sense of Definition~\ref{T}, where indeed not every function in the Poisson algebra possesses a Hamiltonian vector field. 
\end{rem}

\begin{rem}
In \cite[Lemma~5.8 (3)]{tumpach-bruhat} it was proved, in the context of Banach--Poisson Lie groups, that the left-hand side of equation~\eqref{jacobi_tensor} actually defines a tensor (the Schouten bracket of $\pi$). For the proof, right translations on the group were used. In this general context of Banach Poisson manifolds where the existence of Hamiltonian vector fields is not assumed, it is an open question to prove that the left-hand side of equation~\eqref{jacobi_tensor} defines a tensor.
\end{rem}

In \cite{li24}, a similar definition was introduced under the name of a projective Banach--Poisson manifold and used for the study of classical (projective) $r$-matrices and (projective) Yang--Baxter equations. We give it here for comparison despite the fact that the corresponding Poisson bracket is defined on the whole algebra of smooth functions. Let us first recall the definition of the projective tensor product.  To have a picture in mind, if one start with an Hilbert space $\mathcal{H}$ and its continuous dual $\mathcal{H}^*$, the space of real bounded bilinear maps on $\mathcal{H}\times\mathcal{H}^*$ can be identified with the space of bounded operators on $\mathcal{H}$, the injective tensor product of $\mathcal{H}^*$ and $\mathcal{H}$ can be identified with the Banach space of compact operators on $\mathcal{H}$, whereas the projective tensor product of $\mathcal{H}^*$ and $\mathcal{H}$ can be identified with the Banach space of trace class operators on $\mathcal{H}$.
In fact, the projective tensor product is the Banach space that solves the universal property of tensor products in the category of Banach spaces, see \cite[Theorem 2.9]{ryan}. This means the following:
\begin{prop}[{Universal property for Banach tensor product}] \label{prop:tensor}
Given two Banach spaces $\mathfrak{g}_1$ and $\mathfrak{g}_2$ over the same field $\mathbb{K}$, there is a unique (up to isomorphism) Banach space $\mathfrak{g}_1\oth\mathfrak{g}_2$ such that there exists a continuous bilinear map $\iota:\mathfrak{g}_1\times\mathfrak{g}_2\rightarrow \mathfrak{g}_1\oth\mathfrak{g}_2$ with the following universal property:
if $A : \mathfrak{g}_1\times\mathfrak{g}_2\rightarrow \mathfrak{g}$ is any continuous bilinear mapping into a Banach space $\mathfrak{g}$, then there exists a unique continuous linear mapping $\hat A : \mathfrak{g}_1\oth\mathfrak{g}_2 \rightarrow \mathfrak{g}$  such that $A = \hat A \circ \iota$ and $\|\hat A\| = \|A\|$.
\[
\xymatrix{
	\mathfrak g_1 \times \mathfrak g_2 \ar^{\iota}[r] \ar_A[dr] & \mathfrak g_1 \oth \mathfrak g_2 \ar@{-->}^{\hat A}[d]\\
	& \mathfrak g
	}
\]
\end{prop}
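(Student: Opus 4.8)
The plan is to construct $\mathfrak{g}_1\oth\mathfrak{g}_2$ explicitly as the completion of the algebraic tensor product with respect to the projective norm, verify that the pair $(\mathfrak{g}_1\oth\mathfrak{g}_2,\iota)$ has the stated factorization property with the norm equality $\|\hat A\|=\|A\|$, and then deduce uniqueness up to (isometric) isomorphism by the standard categorical argument.

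First I would equip the algebraic tensor product $\mathfrak{g}_1\otimes\mathfrak{g}_2$ with the projective seminorm
\[
\|u\|_\pi := \inf\Big\{\, \textstyle\sum_{i=1}^n \|x_i\|\,\|y_i\| \ :\ u=\sum_{i=1}^n x_i\otimes y_i \,\Big\},
\]
the infimum being over all finite representations of $u$; it is routine to check that $\|\cdot\|_\pi$ is a seminorm and that $\|x\otimes y\|_\pi\le\|x\|\,\|y\|$. Next, for any continuous bilinear $A:\mathfrak{g}_1\times\mathfrak{g}_2\to\mathfrak{g}$, the universal property of the algebraic tensor product yields a linear map $\tilde A:\mathfrak{g}_1\otimes\mathfrak{g}_2\to\mathfrak{g}$ with $\tilde A(x\otimes y)=A(x,y)$; applying the triangle inequality to an arbitrary representation of $u$ and passing to the infimum gives
\[
\|\tilde A(u)\|_{\mathfrak{g}} \le \|A\|\,\|u\|_\pi .
\]
Specializing to scalar-valued forms $(x,y)\mapsto\phi(x)\psi(y)$ with $\phi\in\mathfrak{g}_1^*$, $\psi\in\mathfrak{g}_2^*$ of norm $\le 1$, and using Hahn--Banach on $\mathfrak{g}_1$ and $\mathfrak{g}_2$ to supply, for a given nonzero $u=\sum_{i=1}^n x_i\otimes y_i$ written with $n$ minimal (so the $x_i$ and the $y_i$ are each linearly independent), functionals with $\sum_i\phi(x_i)\psi(y_i)\neq 0$, one obtains $\|u\|_\pi>0$. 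Hence $\|\cdot\|_\pi$ is a genuine norm, and I define $\mathfrak{g}_1\oth\mathfrak{g}_2$ to be the completion of $(\mathfrak{g}_1\otimes\mathfrak{g}_2,\|\cdot\|_\pi)$ and $\iota$ to be $\mathfrak{g}_1\times\mathfrak{g}_2\to\mathfrak{g}_1\otimes\mathfrak{g}_2\hookrightarrow\mathfrak{g}_1\oth\mathfrak{g}_2$, a continuous bilinear map with $\|\iota\|\le 1$.

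To verify the universal property, I continue from $\tilde A$. Since $\|\tilde A(u)\|_{\mathfrak{g}}\le\|A\|\,\|u\|_\pi$, since $\mathfrak{g}_1\otimes\mathfrak{g}_2$ is dense in $\mathfrak{g}_1\oth\mathfrak{g}_2$, and since $\mathfrak{g}$ is complete, $\tilde A$ extends uniquely to a bounded linear $\hat A:\mathfrak{g}_1\oth\mathfrak{g}_2\to\mathfrak{g}$ with $\|\hat A\|\le\|A\|$ and $\hat A\circ\iota=A$. For the reverse inequality,
\[
\|A\| = \sup_{\|x\|,\|y\|\le 1}\|A(x,y)\|_{\mathfrak{g}} = \sup_{\|x\|,\|y\|\le 1}\|\hat A(x\otimes y)\|_{\mathfrak{g}} \le \|\hat A\|\sup_{\|x\|,\|y\|\le 1}\|x\otimes y\|_\pi \le \|\hat A\|,
\]
so $\|\hat A\|=\|A\|$. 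Uniqueness of $\hat A$ holds because the linear span of $\iota(\mathfrak{g}_1\times\mathfrak{g}_2)$ is dense, and two continuous linear maps agreeing there coincide.

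Finally, uniqueness of $(\mathfrak{g}_1\oth\mathfrak{g}_2,\iota)$ up to isometric isomorphism is the usual argument: if $(E,\iota_E)$ and $(F,\iota_F)$ both enjoy the property, applying it to $\iota_F$ and to $\iota_E$ produces bounded linear $T:E\to F$ and $S:F\to E$ with $T\circ\iota_E=\iota_F$ and $S\circ\iota_F=\iota_E$; then $S\circ T$ and $\mathrm{id}_E$ both factor $\iota_E$, so $S\circ T=\mathrm{id}_E$ by the uniqueness clause, and likewise $T\circ S=\mathrm{id}_F$, with $\|T\|=\|S\|=1$ by the norm equality. The only genuinely delicate point I anticipate is the non-degeneracy of the projective norm — producing, via Hahn--Banach, enough product functionals $\phi\otimes\psi$ to detect a nonzero element of the algebraic tensor product; the rest is bookkeeping with the infimum defining $\|\cdot\|_\pi$ together with the density/completeness extension.
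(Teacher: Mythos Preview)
Your construction is the standard one and is correct; the paper itself does not prove this proposition but simply cites \cite[Theorem~2.9]{ryan}, where exactly this projective-tensor-product construction is carried out. One small point you might make explicit in the uniqueness step: the norm equality applied to $\iota_E$ itself (factored through $(E,\iota_E)$) forces $\|\iota_E\|=\|\mathrm{id}_E\|=1$, and likewise for $\iota_F$, which is what justifies $\|T\|=\|S\|=1$.
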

 
% It turns out the operator norm of $\hat A$ equals the norm of the corresponding bilinear mapping $A$:
% \[
%  \begin{array}{ll}
%  \|\hat A\| & =\sup\{\|A(x,y)\|;\, x\in \mathfrak{g}_1, \|x\|_1\leq 1, y\in \mathfrak{g}_2, \|y\|_2 \leq 1\}\\
%  & = \inf\{C\in\mathbb{R}, \|A(x,y)\|\leq C\|x\|_1 \|y\|_2, \forall x\in\mathfrak{g}_1, \forall y \in \mathfrak{g}_2\}.
%  \end{array}
% \]

In particular, one has a Banach isomorphism between the continuous dual of the projective tensor product $\mathfrak{g}_1\oth\mathfrak{g}_2$ and the space $B(\mathfrak{g}_1, \mathfrak{g}_2; \mathbb{K})$ of $\mathbb{K}$-valued continuous bilinear maps on $\mathfrak{g}_1\times\mathfrak{g}_2$, which itself is isomorphic (as a Banach space) to the space of linear maps from $\mathfrak{g}_1$ into $\mathfrak{g}_2^*$:
\begin{equation}\label{isom}
(\mathfrak{g}_1\oth\mathfrak{g}_2)^* \cong B(\mathfrak{g}_1, \mathfrak{g}_2; \mathbb{K}) \cong L(\mathfrak{g}_1; \mathfrak{g}_2^*).
\end{equation}

\begin{defn}[Li--Wang \cite{li24}, Definition 4.1, Definition 4.3]\label{LW}
Let $M$ be a Banach manifold and $\bar{\Lambda}^2 TM$ be the vector bundle over $M$, whose fiber over $m$ is in the space of skew-symmetric elements in the tensor product $T_{m}M\oth T_{m}M $, denote by $\Lambda^2 TM$ the vector bundle over $M$ whose fiber over $m$ is the space of skew-symmetric continuous bilinear forms on $T_{m}M$, and and denote by $\Lambda^2 T^*M$ the vector bundle over $M$ whose fiber over $m$ is the space of skew-symmetric continuous bilinear forms on $T_{m}^*M$.
\begin{itemize}
\item A smooth section $\pi$ of $\bar{\Lambda}^2 TM$ is called a \textbf{projective Poisson bivector} if for any closed local sections $\alpha, \beta, \gamma \in T^*M$, 
\begin{equation}\label{poisson_li}
      \left(\alpha\oth d\left((\beta\oth \gamma)(\pi)\right)\right) (\pi)+ \left(\beta\oth d\left((\gamma\oth \alpha)(\pi)\right)\right) (\pi)+ \left(\gamma\oth d\left((\alpha, \beta)(\pi)\right) \right)(\pi)= 0,
    \end{equation}    
where $\beta\oth \gamma$ lies in $\Lambda^2TM$ such that $(\beta\oth\gamma)_m(\pi_m) = (\beta_m\oth \gamma_m)(\pi_m)$ and $d\left((\beta\oth \gamma)(\pi)\right)$ is the differential of $(\beta\oth \gamma)(\pi)$.
    
    \item The pair $(M,\pi)$ is called the \textbf{ projective Banach--Poisson manifold} if $\pi$ is a projective Poisson bivector on $M$.
    \end{itemize}
    \end{defn}

%    \begin{rem}
%Note that Definition~\ref{LW} is a particular case of Definition~\ref{T} since for a general Banach space $E$, the space $E \oth E$ is strictly contained inside the space $B(E^*, E^*; \mathbb{K})$ of continuous bilinear forms on $E^*$. In particular, as mentioned above, for a Hilbert space $\mathcal{H}$, a projective Poisson tensor may be identified with a trace class operator, whereas a Poisson tensor in the sense of Definition~\ref{T} will correspond to an operator which may be only bounded. 
%\end{rem}

\begin{rem}
%Note that Definition~\ref{LW} is a particular case of Definition~\ref{T}. 
For instance, as mentioned above, for a Hilbert manifold modeled on a Hilbert space $\mathcal{H}$, a projective Poisson tensor may be identified with a trace class operator, whereas a Poisson tensor in the sense of Definition~\ref{T} will correspond to an operator which is only supposed to be bounded. 
%For any $p \in M$, it follows from \eqref{isom} and \eqref{inclusion2} that we have an injective continuous linear map:
% \[ T_pM\hat{\otimes}  T_pM \to B(T_p^*M, T_p^*M; \mathbb{K}).\] Thus, it induces a smooth injective vector bundle morphism: \[\tau: \bar{\Lambda}^2 TM \to \Lambda^2 T^*M.\]
%  Now we observe that Definition~\ref{LW} is a particular case of Definition~\ref{T} with $\mathbb{F}=T^*M$. In fact $(M, T^*M, \tau \circ \pi)$ defines a generalized Banach Poisson manifold (see Definition~\ref{T}). For instance, as mentioned above, for a Hilbert manifold modelled on a Hilbert space $\mathcal{H}$, a projective Poisson tensor may be identified with a trace class operator, whereas a Poisson tensor in the sense of Definition~\ref{T} will correspond to an operator which is only supposed to be bounded. 
\end{rem}

% todob{what about the smoothness condition of the Poisson tensors?}

Definition~\ref{LW} allows to define a Poisson bracket on the whole space of smooth functions on $M$ by 
\begin{equation}\label{pli}
\{f, g\} = \left(df \oth dg\right)(\pi),
\end{equation}
where $f, g \in \Ci(M)$.

\begin{rem}\label{rem:strong_jacobi}
Note that, as was mentioned in Remark~3.2(2) in \cite{tumpach-bruhat} and Remark~4.2(a) in \cite{li24}, on Banach manifolds without bump functions, there may be elements of the cotangent space $T_m^*M$ or of the subspace $\mathbb{F}_m$ that are not the differential at $m$ of a globally defined smooth map $f:M \rightarrow \mathbb{R}$.
This means that differentials of globally defined smooth functions may only generate a proper subspace of $\mathbb F$ or $T^*M$ in general. Hence identities~\eqref{jacobi_tensor} and \eqref{poisson_li} imply that the brackets $\pb$ defined by equation~\eqref{bracket_tensor} and \eqref{pli} satisfy Jacobi identity, but the converse is not necessarily true.
\end{rem}
%\todop{Add definition of Nuclear space, cite \cite{michor} for the proof of nuclearity of smooth loop groups and diffeomorphism groups.}

\section{Comparison of different approaches to Poisson Geometry in infinite-dimensions}\label{sec:comparison}
\subsection{Poisson algebra of functions vs subbundles of the cotangent bundle}\label{sec:bundle-algebra}

One of the differences between the definitions presented in the previous section is the choice of the object used to distinguish a class of functions, on which the Poisson bracket will be defined. Definition~\ref{partial_Poisson} and Definition~\ref{T} define a certain subbundle $\mathbb F$ of the cotangent bundle $T^*M$. The Poisson bracket is then defined on the functions with the differential being a section of $\mathbb F$. It leads to an associative algebra $\mathcal A^{\mathbb F}\subset \Ci(M)$ defined as 
\[ \mathcal A^{\mathbb F} = \{ f\in \Ci(M) \;|\; df\in \Gamma(\mathbb F)\}. \]
On the other hand, Definition~\ref{NST} and Definition~\ref{BO} start with a choice of an algebra. % skipping the subbundle $\mathbb F$. 
%At first glance it might seem to offer more flexibility, as not every algebra $\mathcal A$ can be obtained as $\mathcal A^{\mathbb F}$ for some subbundle $\mathbb F$. However the gain is not significant in our opinion for the following reason. 
Having an algebra $\mathcal A$, one can introduce a co-distribution $\mathbb F^{\mathcal A}\subset T^*M$ spanned at each point by differentials of elements in $\mathcal A$ (it is called the co-characteristic distribution in \cite{beltita-odzijewicz} and the first jet $J^1(\mathcal{A}$) of $\mathcal{A}$ in \cite{tumpach-bruhat}):
\[
%\mathbb F^{\mathcal A}_m =\{\alpha\in T_m^*M \textrm{ such that there exists } f \in\mathcal{A} \textrm{ with } df_m = \alpha \}
\mathbb F^{\mathcal A}_m =\{df_m \;|\; f \in\mathcal{A}\} \subset T_m^*M.
 \]
This co-distribution yields another algebra $\widetilde{\mathcal A}=\mathcal A^{(\mathbb F^{\mathcal A})}$ defined, like before, as the collection of functions with differential being a section of the co-distribution $\mathbb F^{\mathcal A}$. In general $\widetilde{\mathcal A}$ might be larger than $\mathcal A$. For one thing $\widetilde{\mathcal A}$ is always unital. For example, if one takes as $\mathcal A$ the algebra of all even functions $f$ on $\mathbb R\setminus\{0\}$, i.e. satisfying $f(-x) = f(x)$, one obtains $\widetilde{\mathcal A}=\Ci(\mathbb R\setminus\{0\})$ since the dependence of the derivatives at $x$ and $-x$ is lost in the process. Note that in general $\mathbb F^{\mathcal A}$ may not be a vector bundle. Indeed, by replacing $\mathbb R\setminus\{0\}$ by $\mathbb R$ in the previous example, we see that the co-distribution associated with the algebra of even functions on $\mathbb{R}$ is equal to the whole cotangent space on $\mathbb{R}\setminus \{0\}$, but reduces to $\{0\}$ at $0\in \mathbb{R}$.

Since both Definition~\ref{NST} (when restricted to Banach setting) and Definition~\ref{BO} assume the existence of Hamiltonian vector fields, they imply that Poisson brackets depend only on the first differentials. Thus, it implies the existence of a kind of ``Poisson tensor'' $\pi:\mathbb F^{\mathcal A}\to TM$, satisfying
\begin{equation}\label{poisson_tensor}
    \pi(df,dg) = \{f,g\}
\end{equation}
for $f,g\in\mathcal A$. It is then a sufficient object to extend the Poisson bracket to $\widetilde{\mathcal A}$ via the same formula \eqref{poisson_tensor} for $f,g\in\widetilde A$. However in general $\pi$ may not be a smooth section of any bundle, only a collection of bilinear antisymmetric maps defined on fibers of co-distribution $\mathbb F^{\mathcal A}$. That leads to numerous problems. First of all, there is no reason to expect $\{f,g\}$ to be smooth or to belong to $\widetilde{\mathcal A}$, see \cite[Remark 2.3]{neeb14}. Moreover, it may not satisfy Jacobi identity or have smooth Hamiltonian vector fields. One way around this difficulty is  to consider the subalgebra \[\mathcal{B}= \{f\in \widetilde{\mathcal{A} }\;|\; \pi(df,\cdot) = X_f \in \Gamma(TM)\}.\]
For $f,g\in \mathcal{B}$, then the bracket  $\{f,g\}=X_f(g)=dg(X_f)$ is in $\Ci(M)$ since $X_f$ and $dg$ are smooth sections of $TM$ and $T^*M$ respectively.

On the other hand, if we start with a bundle (or co-distribution) $\mathbb F$, and consider the algebra of globally defined functions $\mathcal A^{\mathbb F}$, it may happen in general that the differentials of functions from $\mathcal A^{\mathbb F}$ do not span the whole $\mathbb F$. It is a consequence of the lack of bump functions on $M$, see also Remark~\ref{rem:strong_jacobi}. Therefore it may happen that $\mathbb F^{(\mathcal{A}^{\mathbb F})}$ might be strictly contained in $\mathbb F$. However an example of such a situation is not yet known.

Note that it is often interesting to study canonical Poisson structures on a Cartesian product of two Banach manifolds $M_1\times M_2$. If we deal with algebras of functions $\mathcal A_1\subset \Ci(M_1)$, $\mathcal A_2\subset \Ci(M_2)$, one needs to consider the algebra of functions on the product Banach manifold $M_1\times M_2$ which project to functions in $\mathcal A_1$ (resp. $\mathcal{A}_2$) via the projection on the first (resp. second) factor (in the spirit of \cite[Theorem 2.2]{OR}). On the other hand, working with subbundles $\mathbb F_1\subset T^*M_1$ and $\mathbb F_2\subset T^*M_2$, it is straightforward to define the Whitney sum of them $\mathbb F_1\oplus \mathbb F_2$, see \cite[Proposition 5.1]{tumpach-bruhat}.

% \todot{That paragraph above might also be problematic...}
% \todop{maybe, I think one idea was to replace subbundle with co-distribution in [Tum]. But then $\pi(df,dg)$ may not define a smooth function if $\pi$ is not smooth(i.e not a smooth section of a vector bundle).}
\subsection{Localizability}

The notion of localizability for Poisson brackets on $\Ci(M)$ was given in Definition~\ref{def:localizable}, see \cite{BGT}. This property is important e.g. in the situation when one attempts to do some calculations in a chart. Moreover, in general, Hamiltonian vector fields might not be complete, thus local point of view on the dynamical system might be more convenient.

For Poisson brackets on some subalgebra of $\Ci(M)$ the situation is more complicated.

For the brackets defined via Poisson tensor (Definition~\ref{T}), one can easily define the localization for Poisson bracket as follows:
\[ \pb_U : \mathcal A_U^{\mathbb F} \times \mathcal A_U^{\mathbb F} \to \mathcal A_U^{\mathbb F}\]
where $U\subset M$ is open and
\[ \mathcal A_U^{\mathbb F} = \{ f \in \Ci(U)\;|\; df\in \Gamma(\mathbb F_{|U})\}. \]
Conditions 1(a) and 1(b) from Definition~\ref{T} guarantee that $\pb_U$ takes value in the same algebra $\mathcal A_U^{\mathbb F}$ and satisfies Jacobi bracket, see also Remark~\ref{rem:strong_jacobi}.

Now, in the case of Poisson brackets defined in Definition~\ref{NST} and Definition~\ref{BO}, the discussion in the previous subsection applies, i.e. one might define
\[ \widetilde{\mathcal A}_U = \{ f \in \Ci(U)\;|\; df\in \Gamma(\mathbb F^{\mathcal A}_{|U})\}. \]
However in this case, there is no reason to expect in general to get a Poisson algebra structure on $\widetilde{\mathcal A}_U$: neither smoothness of Poisson bracket nor Jacobi are guaranteed.

The Poisson brackets defined in Definition~\ref{partial_Poisson}, on the first glance, seem similar to the ones defined in Definition~\ref{T}. Since bundle morphism $P$ is by definition smooth, the localized brackets on $\mathcal A_U^{\mathbb F}$ still produce smooth functions and Proposition~\ref{prop:flat} can be generalized to conclude that the result is still in the same local algebra of functions $\mathcal A_U^{\mathbb F}$. However, Jacobi identity was assumed only for globally defined functions, and as result, it might not hold for locally defined ones, see Remark~\ref{rem:strong_jacobi}.

% \subsection{Weak symplectic manifolds}

\subsection{Examples of Poisson brackets}

In this section, we emphasize through examples the differences between the definitions of Poisson structures given in section~\ref{sec:weak}. 

% \todob{Find examples that fits the inclusions of generality}

\begin{example}
Let us begin by building on the example in Section~\ref{sec:bundle-algebra}. Consider $M=\R^2$ with the canonical Poisson bracket $\{f,g\} = \frac{\partial f}{\partial x}\frac{\partial g}{\partial y}-\frac{\partial f}{\partial y}\frac{\partial g}{\partial x}$. Define a subalgebra $\mathcal A$ of $\Ci(\R^2)$ consisting of functions even with respect to the first variable, i.e. $f(x,y)=f(-x,y)$ for all $x,y\in\R$. The restriction of the canonical Poisson bracket to $\mathcal A$ is a Poisson bracket in the sense of Definitions \ref{NST} and \ref{BO}, but not with respect to the others, as the corresponding co-characteristic distribution $\mathbb F^{\mathcal A}$ does not have a constant rank and is not a subbundle of $T^*M$.
\end{example}

\begin{example}
A weak symplectic structure on a Banach manifold allows one to define a Poisson bracket in the sense of Definition~\ref{partial_Poisson} (\cite[Example 2.2.4]{pelletier19}) and Definition~\ref{T} (\cite[Proposition 3.11]{tumpach-bruhat}).

In order for this Poisson bracket to satisfy Definition~\ref{sub_Poisson} one needs to make sure that the range of the flat map $\flat$ defined by the weak symplectic form constitutes a Banach subbundle, see \cite[Example 4.2]{pelletier}.
 
In order for this Poisson bracket to satisfy Definition~\ref{NST} or Definition~\ref{BO}, one needs to impose an additional condition on the weak symplectic form to ensure that the algebra of functions on which the Poisson bracket is defined has the required separation property, see
\cite[Proposition~2.18]{neeb14} and \cite[Proposition~2.2.2]{beltita-odzijewicz}.
\end{example}

\begin{example}\label{ex:u1}
As manifold $M$ consider the Lie algebra $\mathfrak{u}_1(\mathcal{H}) = M$ consisting of skew-symmetric trace class operators on an infinite-dimensional separable Hilbert space $\mathcal{H}$. Consider a decomposition of  $\mathcal{H}$ into the sum of two infinite-dimensional orthogonal closed subspaces $\mathcal{H} = \mathcal{H}_+\oplus \mathcal{H}_-$. We will denote by $p_+$ the orthogonal projection onto $\mathcal{H}_+$. The dual space of $\mathfrak{u}_1(\mathcal{H})$ can be identified via the trace $\Tr$ with the space $\mathfrak{u}(\mathcal{H})$ consisting in skew-symmetric bounded operators, see e.g. \cite[Proposition 2.1]{Ratiu-grass}. Hence the differential at $\mathfrak{a}\in \mathfrak{u}_1(\mathcal{H})$ of a smooth function $f$ defined on $\mathfrak{u}_1(\mathcal{H})$ is an element in $\mathfrak{u}(\mathcal{H})$. Consider the subbundle $\mathbb{F}$ of $T^*\mathfrak{u}_1(\mathcal{H})$, whose fiber over $\mathfrak{a}\in \mathfrak{u}_1(\mathcal{H})$ is the space $\mathfrak{u}_2(\mathcal{H})$ consisting of skew-symmetric Hilbert--Schmidt operators:
    \[
\mathbb{F}_\mathfrak{a} = \mathfrak{u}_2(\mathcal{H}) \subset \mathfrak{u}(\mathcal{H}) \simeq \mathfrak{u}_1^*(\mathcal{H}) \simeq T_\mathfrak{a}^* \mathfrak{u}_1(\mathcal{H}),
    \]
for any $\mathfrak{a} \in \mathfrak{u}_1(\mathcal{H})$. Define the following antisymmetric bilinear form $\pi\in \Lambda^2\mathbb{F}^*$ on $\mathbb{F}$ by 
    \[
\pi_\mathfrak{a}(A, B)  = \operatorname{Tr}(i p_+ [A, B]),
    \]
where $A, B\in \mathfrak{u}_2(\mathcal{H}) = \mathbb{F}_\mathfrak{a}$ and  $[A, B] = AB - BA$.
Note that since $p_+ B$ and $A$ are Hilbert--Schmidt, we have
\[
    \pi_\mathfrak{a}(A, B)  = \operatorname{Tr}(i p_+ AB - i p_+ BA) = i \operatorname{Tr}( p_+ AB - A p_+ B) = \operatorname{Tr}([i p_+, A] B).
\]
Since $\pi$ is a constant bilinear form on $M = \mathfrak{u}_1(\mathcal{H})$, it satisfies Jacobi identity \eqref{jacobi_tensor} (see also the proof of \cite[Theorem~3.14]{tumpach-bruhat}, where the Jacobi identity is proved for Lie--Poisson brackets).
 % \todob{It is readily the same proof as Theorem~3.14 with one term less. No need to put it here I think.}
 % \todot{Yeah, I said it yesterday, but I would prefer to prove it for any constant tensor. But I agree, no need to put it here.}

Denote by $\mathcal{A}^\mathbb{F}$ the subalgebra of $\Ci(M)$ consisting of those functions whose differential at any point $\mathfrak{a}\in \mathfrak{u}_1(\mathcal{H})$ can be identified with an element in  $\mathfrak{u}_2(\mathcal{H}) = \mathbb{F}_\mathfrak{a}.$
In particular, linear functions  $f:\mathfrak{u}_1(\mathcal{H})\rightarrow \mathbb{R}$ of the form \begin{equation}\label{linear} f(\mathfrak{b}) = \operatorname{Tr}(A \mathfrak{b}), \end{equation} with $A$ a skew-hermitian Hilbert--Schmidt operator, belong to $\mathcal{A}^\mathbb{F}$. For such a function, one has $d_\mathfrak{a}f(\mathfrak{b})  = \operatorname{Tr} (A \mathfrak{b})$ for all $\mathfrak{a}\in \mathfrak{u}_1(\mathcal{H})$, hence $d_\mathfrak{a} f$  can be identified with  $A$ via the trace. It follows that 
    \[
\pi_\mathfrak{a}(d_\mathfrak{a}f, B) = \operatorname{Tr}([i p_+, d_\mathfrak{a}f] B) = \operatorname{Tr}([i p_+, A] B),\]
for any $\mathfrak{a}\in M = \mathfrak{u}_1(\mathcal{H})$ and any $B\in \mathbb{F}_\mathfrak{a} = \mathfrak{u}_2(\mathcal{H})$. In particular, the derivation of the algebra $\mathcal{A}^\mathbb{F}$ associated to the function $f$ via $\pi$ is given by
\[
\operatorname{Der}_f(g) =  \operatorname{Tr}([i p_+, A] dg),
\]
where $g\in \mathcal{A}^\mathbb{F}$. Consider the block decomposition of $A$ with respect to the direct sum $\mathcal{H} = \mathcal{H}_+\oplus \mathcal{H}_-$
\[
A = \left(\begin{array}{cc} A_{++} & A_{+-}\\ -A_{+-}^* & A_{--}\end{array}\right),
\]
then 
\[
[i p_+, A] = \left(\begin{array}{cc} 0 & i A_{+-}\\ -i A_{+-}^* & 0 \end{array}\right).
\]
It follows that the derivation $\operatorname{Der}_f$ corresponds to a vector field on $M = \mathfrak{u}_1(\mathcal{H})$ if and only if $A_{+-}$ is a trace class operator. More generally, any function $f$ in $\mathcal{A}^\mathbb{F}$ whose differential at any point $\mathfrak{a}\in\mathfrak{u}_1(\mathcal{H})$ in block decomposition has co-diagonal blocks Hilbert--Schmidt but not trace class will not admit a Hamiltonian vector field. In consequence, this example is a Poisson manifold in the sense of Definition~\ref{T}, but not for any other definition presented in this paper.
\end{example}
\begin{example}
    In \cite[Theorem~14]{GT-u-bial}, the authors introduced a Poisson structure in the sense of Definition~\ref{T} on the unitary group $U(\mathcal{H})$ of an infinite-dimensional separable Hilbert space, with subbundle $\mathbb{F}\subset T^*U(\mathcal{H})$ defined by the right translations of the subspace $L_1(\mathcal{H})/\mathfrak{u}_1(\mathcal{H})\subset \mathfrak{u}^*(\mathcal{H})$ (see \cite[equation~(6)]{GT-u-bial}). In particular, $\mathbb{F}$ is not a Banach subbundle of $T^*U(\mathcal{H})$. Hence, this example is not a Poisson manifold in the sense of Definition~\ref{sub_Poisson}.
\end{example}

\subsection{Summary of the differences between previous approaches}

Let us list here the different properties that are useful in the study of Poisson manifolds, and point out the ones which are implied by the definitions we considered here.
\begin{enumerate}
    \item \textbf{Jacobi identity for a subalgebra $\mathcal{A}$ of $\Ci(M)$} (see Definition~\ref{def:Jacobi}). In \cite{pelletier} almost Poisson brackets are also considered, but we did not focus on that case. Thus, all definitions satisfy the Jacobi identity for globally defined smooth functions on $M$. 

    \item \textbf{Jacobi identity for an antisymmetric tensor $\pi$ on a subbundle $\mathbb{F}\subset T^*M$} (see Definition~\ref{T}~1.): it is assumed only in Definitions~\ref{T}~1 and \ref{LW} to hold on the level of Poisson tensor for locally defined closed sections. Due to the lack of bump functions in general, it might not hold in all cases, see Remark~\ref{rem:strong_jacobi} and \cite[Remark 2.3]{neeb14}.
   % for any closed local sections $\alpha$, $\beta$,
   % $\gamma$ of $\mathbb{F}$,
   % \begin{equation}\label{jacobi_tensor}
   %   \pi\left(\alpha, d\left(\pi(\beta, \gamma)\right)\right) + \pi\left(\beta, d\left(\pi(\gamma, \alpha)\right)\right) + \pi\left(\gamma, d\left(\pi(\alpha, \beta)\right) \right)= 0.
   % \end{equation}
    \item \textbf{Localizability of Poisson bracket} (see Definition~\ref{def:localizable}): all definitions imply the existence of a local almost Poisson bracket due to the existence of some kind of Poisson tensor or Poisson anchor. However, these local Poisson brackets might not satisfy Jacobi identity or even preserve smoothness in general.
    
    \item \textbf{Leibniz identity for a subalgebra $\mathcal{A}$ of $\Ci(M)$} (see Definition~\ref{def:Leibniz}): note that Leibniz identity does not imply the existence of Poisson tensor by itself, see \cite{BGT}.
    
    \item \textbf{Separability of globally defined vector fields} by a subalgebra $\mathcal{A}$ of $\Ci(M)$ (see Definition~\ref{BO}~2.)
    \item \textbf{Separability of tangent vectors} by a subalgebra $\mathcal{A}$  of $\Ci(M)$ (see Definition~\ref{NST}~(P2)): for Banach manifold it is not automatic that every tangent vector is a value of some globally defined vector field, thus this condition may be stronger than separability of globally defined vector fields
    \item \textbf{Vector bundle structure} on $\mathbb{F}^\mathcal{A}$ (see section~\ref{sec:bundle-algebra}): Definitions~\ref{NST} and \ref{BO} do not assume existence of vector bundle structure. In Definition~\ref{sub_Poisson} the bundle is required to be a Banach subbundle, while in Definition~\ref{T} it is only assumed to be subbundle with its own Banach structure.
    \item \textbf{Existence of Poisson anchor} (see equation~\ref{def_bracket_anchor}): a linear map from $\mathbb{F}^\mathcal{A}$ to $TM$. With Definition~\ref{T}, it takes values in $T^{**}M$ in general.
    \item \textbf{Existence of Poisson tensor} (see equation~\ref{poisson_tensor}): requires the existence of bundle structure on $\mathbb{F}^\mathcal{A}$ and it should be a smooth section of $\bigwedge^2(\mathbb{F}^\mathcal{A}$
    % \item \textbf{Existence of Poisson bracket}  (see equation~\ref{def_bracket_anchor} or equation~\ref{bracket_tensor})
    % \item \textbf{Smoothness of Poisson tensor}
    % \item \textbf{Smoothness of Poisson anchor}

    \item \textbf{Existence of Hamiltonian vector fields} for all functions on which the Poisson bracket is defined (see e.g. Definition~\ref{NST}~(P3)): satisfied by all definitions except Definition~\ref{T}. In \cite[Remark 4.5]{li24} the existence is claimed without proof.
    %\item \textbf{Lie bracket on one-forms} (see equation~\eqref{equ_7_in_CP12})
\end{enumerate}

\newcommand{\chk}{\textcolor{LimeGreen}{$\surd$}}
\newcommand{\no}{}
\newcounter{rownumbers}
\newcommand\rownumber{\stepcounter{rownumbers}\arabic{rownumbers}. }
\begin{table}[!ht]
% \hspace{-1.5cm}
\begin{tabular}{|l|w{c}{3em}|w{c}{3em}|w{c}{3em}|w{c}{3em}|w{c}{3em}|w{c}{3em}|}
\hline 
Reference & \cite{pelletier} & \cite{cabau-pelletier_book} & \cite{neeb14} & \cite{beltita-odzijewicz} & \cite{tumpach-bruhat} & \cite{li24}\\
\hline
  Definition number  & \ref{sub_Poisson} & \ref{partial_Poisson} & \ref{NST} &  \ref{BO}   & \ref{T} & \ref{LW}\\
     \hline
     \textbf{\rownumber Jacobi for $\mathcal{A}$ } & \chk&\chk&\chk&\chk&\chk&\chk\\
     \hline
     \textbf{\rownumber Jacobi for $\pi$ on $\mathbb{F}^\mathcal{A}$} &\no&\no&\no&\no&\chk&\chk\\
     \hline
     \textbf{\rownumber Localizability} &\no&\no&\no&\no&\chk&\chk\\
     \hline
     \textbf{\rownumber Leibniz for $\mathcal{A}$} &
     \chk&\chk&\chk&\chk&\chk&\chk\\\hline
     \textbf{\rownumber Separability of $\Gamma(TM)$} &\no&\no&\chk&\chk&\chk&\chk\\
     \hline
     \textbf{\rownumber Separability of $TM$} &\no&\no&\chk&\no&\chk&\chk\\
     \hline
     \textbf{\rownumber $\mathbb{F}^\mathcal{A}$ vector bundle}   &\chk&\chk&\no&\no&\chk&\chk\\
     \hline
     \textbf{\rownumber Poisson anchor} &\chk&\chk&\chk&\chk&\no&\no\\
     \hline
     \textbf{\rownumber Poisson tensor} &\chk&\chk&\no&\no&\chk&\chk\\
     \hline
     % \textbf{9. Poisson bracket} &\chk&\chk&\chk&\chk&\chk&\chk\\
     % \hline
     % \textbf{10. Poisson tensor smooth} &\chk&\chk&\no&\no&\chk&\chk\\
     % \hline
     % \textbf{11. Poisson anchor smooth} &\chk&\chk&\no&\no&\chk&\chk\\
     % \hline
     \textbf{\rownumber Hamiltonian v.f.} &\chk&\chk&\chk&\chk&\no&\\
     \hline
 %    \textbf{13. Lie bracket on $\Gamma(T^*M)$} &&&&&&\\
 %    \hline
\end{tabular}
\caption{Summary of the properties of the different definitions of Poisson structures presented in Section~\ref{sec:weak}. 
%\begin{itemize}
%\item 
A \chk{} sign means that the property is implied by the definition. 
%\item 
Otherwise,
%\no means that
the property is 
%Not- Automatically implied by the definition hence is 
not necessarily satisfied.}
%\end{itemize}}
\end{table}

Let us also add that the queer Poisson brackets discussed in Section~\ref{sec:strong} and in \cite{BGT}, which depend on higher derivatives of functions, have all listed properties except Poisson anchor, Poisson tensor and Hamiltonian vector fields.

From the observations made in this section, one can conjecture that the most general approach to Poisson brackets would be to define it using a co-distribution on a Banach manifold satisfying certain conditions. In this manner, all definitions stated here would be included as special cases. However one should take care to impose conditions strong enough to guarantee the required properties, e.g. localizability.

\end{document}